\theoremstyle{plain}
 \theoremstyle{plain}
\newtheorem{thm}{Theorem}
 \theoremstyle{definition}
  \theoremstyle{ }
  \newtheorem{defn}[thm]{Definition}
  \theoremstyle{plain}
  \newtheorem*{cor*}{Corollary}
  \theoremstyle{plain}
  \newtheorem{lem}{Lemma}
\newtheorem{theorem}{Theorem}
\newcommand{\argmax}{\operatornamewithlimits{argmax}}
\begin{document}
\title{A Greedy Link Scheduler for Wireless Networks with Fading Channels}
\date{}
\author{Arun ~Sridharan, and~ C. Emre ~Koksal}

\maketitle
\begin{abstract}
We consider the problem of link scheduling for wireless networks
with fading channels, where the link rates are varying with time. Due to the high
 computational complexity of the throughput optimal scheduler, we provide a low
 complexity greedy link scheduler GFS, with provable performance guarantees. We show that the performance of our greedy scheduler can
  be analyzed using the Local Pooling Factor (LPF) of  a network graph, which has been
   previously used to characterize the stability of the Greedy Maximal Scheduling (GMS) policy for networks with static channels. We conjecture that the performance of GFS is a lower bound on the performance of GMS for wireless networks with fading channels.
\end{abstract}
\vspace{-0.1in}
\section{Introduction}

The link scheduling problem for wireless networks has received
considerable attention in the recent past. In a wireless network
with shared spectrum, interference from neighboring nodes prevents
all nodes in the network from transmitting simultaneously at full
interference free rate. A link scheduler chooses a set of links to
deactivate at every time instant to eliminate their interference on
other links and only active links transmit data. An important
performance objective of a scheduler is throughput optimality,
\emph{i.e.,} for any given network, the scheduler should keep all
the queues in the network stable for the largest set of arrival
rates that are stabilizable for that network.\par For wireless
networks in which a set of link activation vectors are defined
according to a general binary interference model, the
Maxweight policy or the dynamic back-pressure policy is known to be
throughput optimal \cite{Tass}. Maxweight type policies have also been
shown to be throughput optimal for wireless networks with fading
channels, where the link rates vary over time \cite{Stolyar,Eryilmaz}.
However, the Maxweight policy suffers from high computational
complexity (NP-hard in many cases, including $k$-hop interference
models, k$>$1) \cite{sharma}, and has therefore motivated the study
of schedulers that have low complexity, are amenable to distributed
implementation and also offer provable performance guarantees.
Examples of such schedulers include Greedy Maximal Scheduling (GMS)\cite{Linshroff}
and Maximal Scheduling\cite{Chaporkar}, which have been widely studied for wireless
networks with static channels. \par 

There has been a number of studies that analyze the performance of GMS as a function of the network topology. The main parameter of focus has been efficiency, which is defined as the largest fraction of the network capacity region guaranteed to be stable under GMS. In \cite{Linshroff}, efficiency has been evaluated as a function of the local pooling factor of a network graph (LPF),
which depends on the network topology and interference constraints.
Later, using the LPF, GMS has been shown to be throughput optimal for a
wide class of network graphs under the node exclusive interference
model \cite{Birand, Zussman}.\par The performance analysis of the
aforementioned low complexity schedulers does not however, carry
over to the scenario with fading, in which link rates are
time-varying. For instance, unlike a static network, one cannot
conclude in a network with time-varying links that satisfying local
pooling under GMS implies throughput optimality. It is only known
that in the case of the node-exclusive interference model, GMS can
achieve at least half the network stability region. Thus, it is of interest to investigate if for networks with time varying link rates, GMS performs  as well as it does in networks with fixed link rates.
\cite{Linimperfect}. \par  In this paper, we develop a greedy link
scheduler, GFS, for wireless networks with fading channels, which,
although not throughput optimal, has low computational complexity
and offers provably good performance guarantees. We show that the
performance of our greedy scheduler can be related to the LPF of a
network graph. We then conjecture that the performance of GFS is a lower bound on the performance of GMS for wireless networks with time-varying link rates.
\vspace{-0.05in}
\section{System Model}
\begin{figure}
\psfrag{a1}{$S_1$\vspace{10mm}} \psfrag{a2}{\vspace{4mm} $S_2
\newline$} \psfrag{c11}{\hspace{-3mm} $c_{1}^{1}$}
\psfrag{c21}{\vspace{10mm}$c_{2}^{1}$} \psfrag{c12}{$c_{1}^{2}$}
\psfrag{c22}{$c_{2}^{2}$} \psfrag{c1b}{$\overline{c}_{1}$}
\psfrag{c2b}{$\overline{c}_{2}$} \psfrag{L}{$\Lambda$}
\psfrag{l1}{$l_1$} \psfrag{l2}{$l_2$}
\begin{center}
\includegraphics[bb=0bp 50bp 1165bp 380bp,clip,scale=0.18]{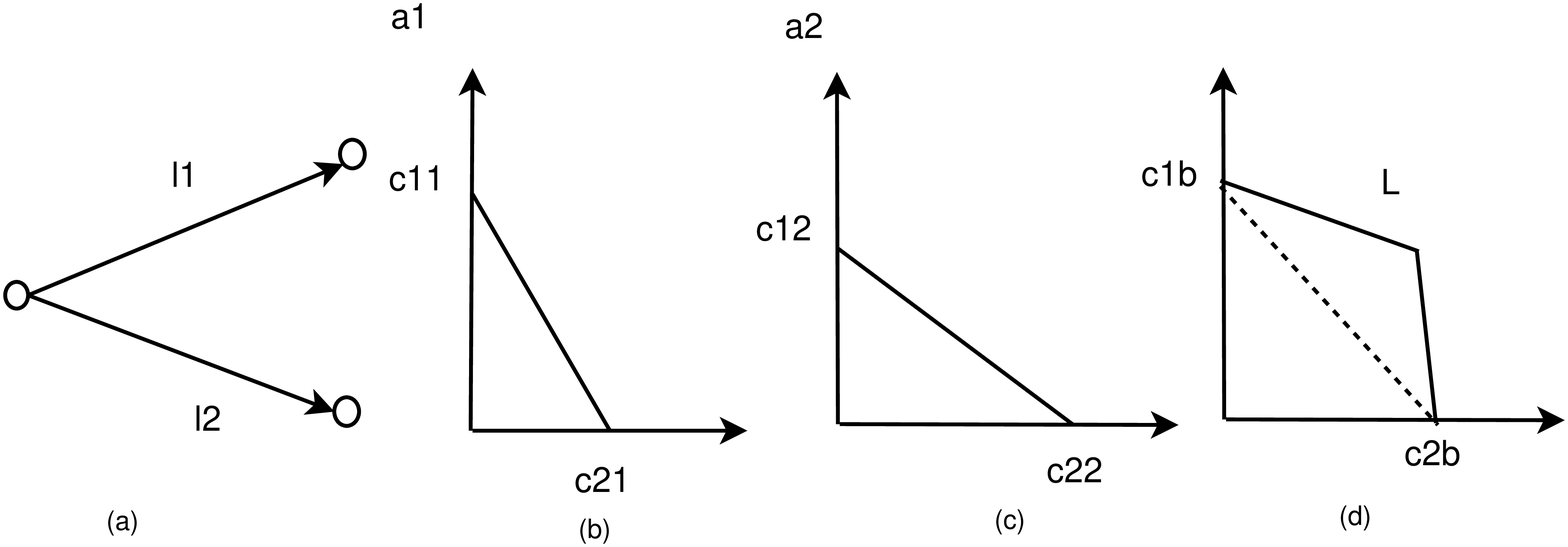}
\par
\end{center}
\caption{Figure shows an example of two interfering links with two fading states $S_1$ and $S_2$, occurring with probability $\pi^1$ and $\pi^2$. The network stability region region, $\Lambda$ is the interior of the region enclosed by the solid lines.}
\label{fig:capacity-region}
\vspace{-0.2in}
\end{figure}
We consider a
wireless network modeled as a graph
$\mathcal{G}=(\mathcal{V},\mathcal{E})$ with edges representing
links. We assume a single hop traffic model where each edge
represents a source-destination pair. Time is divided into slots and
packets arrive at the source node following an i.i.d. process with
a finite mean at the start of each time slot. Let $A_l(t)$ denote the number of packets arriving during time slot $t$. $A_l(t)$ has a mean $\lambda_l$. The vector of channel
states across all links in the network is assumed to be fixed over
the duration of a time slot but changing after every time slot. The
set of channels in the network can assume a state $j \in \{1,\ldots
,J \}$ according to stationary probability $\pi^j$. In each time
slot $t$, the achievable rate of link $l\in\mathcal{E}$, denoted by
$c_l(t)$, assumes value $c_{l}^{j}$ if the network is in fading state
$j$ at time slot $t$. The expected rate of a link, denoted by
$\overline{c}_{l}$ is given by $\overline{c}_l=\sum_{j=1}^{J}\pi^j
c_{l}^{j}$. We assume a generalized binary interference model, in
which each link $l$ is associated with an interference set, denoted
by $\mathcal{I}_l\subset\mathcal{E}$. Set $\mathcal{I}_l$ consists
of the set of links that cannot be active whenever link $l$ is
active.\par 
Let $\vec{r}^{j}$ denote a $1\times|\mathcal{E}|$
rate allocation vector for a network that is in channel state $j$,
where $r^{j}_{i}$ is the rate allocated to link $i\in\mathcal{E}$.
Any rate allocation vector $\vec{r}^{j}$ must satisfy the following
properties:
\begin{itemize}
\item[{(a)}]$r^{j}_{l}>0$ implies $r^{j}_{k}=0$, for all
$k\in\mathcal{I}_l$, where $k\neq l$
\item[{(b)}]There exists no link $k\in \mathcal{E}$ such that
$r^{j}_{k}\neq c^{j}_{k}$ and $k\notin\mathcal{I}_l$ for all $l$
satisfying $r^{j}_{l}>0$. In other words there exists no link that
does not interfere with any other active link and is yet not
scheduled.
\end{itemize}
Let $\mathcal{R}^j$ denote the set of all feasible rate allocation
vectors for a wireless network graph when the network is in channel 
state $j$. Similarly, $\mathcal{R}^j_{\mathcal{L}}$ is the set of all 
feasible rate allocation vectors on the subgraph
$\mathcal{L}\subset\mathcal{G}$. The stability region of the
network~\cite{Tass}\cite{NeelyTass} is then given by the interior of the set
$\Lambda=\{\vec{\lambda}:\vec{\lambda}\preceq\vec{\phi},\ \text{for
some}\ \vec{\phi}=\sum_{j\in{1,\ldots,J}}\pi^j\psi^{j} \}$, 
where $\psi^j \in Co(\mathcal{R}^j)$, with $Co(\mathcal{R}^j)$ representing the 
convex hull of the set $\mathcal{R}^j$, and $\preceq$
denoting component-wise inequality. Fig.~\ref{fig:capacity-region} shows an example of a simple two link
network with two fading states, with the associated network
stability region under a node-exclusive interference model. Figs.~\ref{fig:capacity-region}(a) and Figs.~\ref{fig:capacity-region}(b) illustrate the achievable rate regions in state $S_1$ and $S_2$ respectively. The network stability region $\Lambda$ is shown in Fig.~\ref{fig:capacity-region}(c).\par In related work,
\cite{Stolyar} considered a queueing model analogous to a cellular
network with $N$ links, where the network channel state followed an
irreducible discrete time Markov chain with a finite state space. It was
shown
that the policy which selects the queue with the highest
weight.
\[\max_{l=1\ldots N} q_{l}^{\beta}(t)c_{l}(t)\]
in each time slot, where $q_l$ is the queue size for link $l$  was throughput optimal for this network. In \cite{Neely}, it was shown that a
Maxweight-type scheduling policy was throughput optimal for power
allocation in wireless networks with time varying channels.
Similarly, \cite{Eryilmaz} also showed throughput optimality of a class of
Maxweight type policies for wireless networks with fading channels. \par 
Before we describe our greedy scheduler, we discuss the performance of non-opportunistic schedulers in the following section. In particular, we focus on a scheduler that utilizes only the mean link
rates, instead of instantaneous link rates. For this scheduler, we illustrate that when arrivals are correlated with channel states, 
the non-opportunistic policy can potentially keep serving links that
are experiencing  poor channel states, leading to a loss in throughput.

\vspace{-0.08in}
\subsection{Performance of Non-opportunistic Schedulers} 
We show
that a scheduler that utilizes the mean link
rates, instead of instantaneous link rates could perform arbitrarily
worse in certain cases. To illustrate this, we consider the 
two link network graph shown in  Fig. \ref{fig:capacity-region}. 
In this example, each link $l$ has one queue $Q_l$, into which packets
arrive according to an IID process.
Suppose that the rates of the two links
in each of the channel states are given by $c_{1}^{1}=1,c_{2}^{1}=\epsilon$, and 
$c_{1}^{2}=\epsilon,c_{2}^{2}=1$ respectively. Also, let $\pi^1$ be the 
probability with which the network assumes channel state 1, and 
$\pi^2$ be the probability for network channel state 2. In each time slot, 
the greedy non-opportunistic scheduler that we consider serves the 
queue which maximizes the quantity $Q_l(t)\bar{c_{l}}$.  
We will now construct an arrival traffic for this network under which
the queues for both links grow unbounded under the
non-opportunistic scheduling scheme. \par 
Let the initial queue lengths be $Q_1(0)=Q_2(0)=0$. 
At the beginning of each time slot, packets arrive according to the following statistics:
\begin{itemize}
\item [{(i)}]If the network channel state is 1, then with probability  
$1-\delta$, for an arbitrary $\delta>0$, $\epsilon$ packets arrive into the queue $Q_{1}$, and none 
into $Q_2$; With probability
$\delta$, $C/\bar{c_{1}}+\epsilon$ packets arrive into the queue of link 1,
and $C/\bar{c_{2}}$ packets arrive into the queue of link 2 respectively.  
$C$ is a fixed positive quantity.
\item [{(ii)}]If the network channel state is 2, then with probability  
$1-\delta$, $\epsilon$ packets arrive into the queue $Q_2$, and none into
$Q_1$; With probability
$\delta$, $C/\bar{c_{1}}$ packets arrive into the queue of link 1, and
$C/\bar{c_{2}}+\epsilon$ packets arrive into the queue of link 2 respectively.
\end{itemize}

Under this arrival statistic, we show that the end of each time slot, the length of each queue either remains unchanged or increases by a fixed quantity $C/\bar{c_{i}}$. 
At the beginning of the first time slot, all queues are assumed to be empty. The non-opportunistic scheduler then serves the queue with the highest weight, 
\emph{i.e.,} the queue into which $\epsilon$ or $C/\bar{c_{i}}+\epsilon$ packets have arrived. At the end of each time slot, the queue lengths remain unchanged with probability $1-\delta$, or increase by a fixed quantity $C/\bar{c_{i}}$ with probability $\delta$. Moreover, the queue lengths are also equal at the end of each time slot and of the form $kC$, where $k$ is a nonnegative 
integer. Since the queue length process
is non-decreasing, and the event that the queue length increases by a 
fixed positive quantity occurs infinitely often, the network is unstable
under the greedy non-opportunistic scheduler. The arrival rate vector of our
proposed arrival traffic is determined as
 $\vec{\lambda}=\pi^1(1-\delta)[\epsilon\quad 0]
 +  \pi^1\delta[{\scriptstyle C/\bar{c_1}}\qquad {\scriptstyle C/\bar{c_{2}}}]  +
\pi^2(1-\delta)[0\qquad\epsilon]
 +\pi^2\delta[{\scriptstyle C/\bar{c_1}} \qquad {\scriptstyle C/\bar{c_{2}}}]$, 
which simplifies to $\vec{\lambda}=\epsilon\left[\pi^1 \qquad \pi2] + 
\delta[\frac{{\scriptstyle C}(\pi^1+\pi^2)}{\bar{c_1}}-\epsilon\pi^1 \qquad 
\frac{{\scriptstyle C}(\pi^1+\pi^2)}{\bar{c_2}}-\epsilon\pi^2\right]$. Thus,
when $\epsilon$ is small, the greedy non-opportunistic scheduler is unable to support 
arrival rates that are within a fraction $\epsilon$ 
of the stability region. Note that in the above example, the arrival process
is correlated with the network channel state process. 
\vspace{-0.1in}
\section{ A Greedy scheduler for Networks
with Fading Channels (GFS)} \label{sec:GFS}
 The greedy scheduler that we propose is similar to GMS except that it requires each link to have
a virtual queue corresponding to every channel state of the
network, \emph{i.e.,} each link has a set of $J$ virtual queues. In each time slot, packets arriving into a link $l$ are placed into one of the $J$ queues.  In practice, each link could maintain only one real first-in first-out queue, into which packets arrive and depart, and counters for the virtual queues which keep track of the number of packets in the virtual queue. The GFS scheduler would use the values of the counters to make the scheduling decision. Using such counters, also known as shadow queues have been effective in reducing queueing complexity and delay \cite{Bui}. Let $\mathrm{Q}_{l}^{j}$ be the
virtual queue of link $l$ corresponding to fading state $j$ and $\mathit{q}_{l}^{j}(t)$
denote its size at time $t$. Let $\mathrm{Q}_{l}$ denote the real FIFO queue of link  $l$. 
We now
describe our greedy scheduler:
\begin{itemize}
\item[{(1)}] At the beginning of time slot
$t$, packet arrivals $A_l(t-1)$ are placed in queue $\mathrm{Q}_{l}^{j}$ with
probability $\frac{\pi^jc_{l}^{j}}{\overline{c}_l}$.
\item[{(2)}] In time slot $t$, let the network be in fading state $j$.
GFS observes only the queues
corresponding to fading state $j$, in order to select the rate
allocation vector. The scheduler first selects the link with highest
weight $m=\argmax_{l\in \mathcal{E}}\mathit{q}_{l}^{j}c_{l}^{j},$ removes all links in $\mathcal{I}_m$ from the set
of potential links to be scheduled at time $t$, and repeats the
process until there are no more non-interfering links that remain to
be selected. 
\end{itemize}
At the end of this procedure GFS selects a rate allocation vector
that belongs to $\mathcal{R}^j$, when the network channel state is $j$. Note that the GFS policy becomes identical to GMS in the case of networks with static link rates. Also, the application of the GFS
policy on the queues corresponding to fading state $j$, requires the
knowledge of the network fading state at every node in the network.  The departure process for the virtual queues can now be described as follows: For any link $l$, $\min(\vec{r}(l),Q_{l}^{j}(t))$ packets depart from virtual queue $q_{l}^{j}$, while $\min(\vec{r}(i),Q_{l}(t))$ packets depart from the real FIFO queue $q_l$. 
\vspace{-0.1in}
\subsection{Performance Analysis of GFS}
We now give the
main result of this paper, which uses the LPF of a
network graph to evaluate the stability region achievable using GFS. Before we state our result, we define the following static
wireless network: given any wireless network graph
$\mathcal{G}=(\mathcal{V},\mathcal{E})$ with time varying link
rates, we associate with $\mathcal{G}$ a static wireless network
$\hat{G}=(\mathcal{V},\mathcal{E})$, whose link rates are fixed at
$\bar{c}_{l},\, \forall l$. Let $\hat{\mathcal{R}}$ denote the set of 
all feasible rate allocation vectors for the network graph $\hat{G}$. We also define
$\mathcal{G}^j=(\mathcal{V},\mathcal{E})$ to be a static network 
whose link rates are fixed at $c^{j}_{l},\, \forall l,\, j=1,\ldots,J$. 
Finally, we let $\Lambda$ and $\widehat{\Lambda}$ denote the network stability regions of the
networks $\mathcal{G}$ and $\hat{\mathcal{G}}$ respectively. Note
that $\widehat{\Lambda}\subseteq\Lambda$, since 
The LPF for 
the network graph $\hat{\mathcal{G}}$ can then be defined as follows \cite{JooShroff}:
\begin{defn}
\label{def:def1}Let $\mathcal{L}$ be any subgraph of $\hat{\mathcal{G}}$. Then
$\mathcal{L}$ satisfies $\sigma$-local pooling if, for any given pair  $\vec{\mu},\vec{\nu}$, where $\vec{\mu}$ and $\vec{\nu}$ are convex combinations
of the rate vectors in $\hat{\mathcal{R}}_{\mathcal{L}}$, 
we have $\sigma\vec{\mu}\nprec\vec{\nu}$. \\
The LPF  $\sigma^{*}$, for the network is then defined as:
\begin{align*} 
\sigma^{*}=\sup\left\{\sigma\mid\forall\, \mathcal{L}\subset
\hat{\mathcal{G}},
\mathcal{L} \text{ satisfies } \sigma\text{-local pooling} \right\}.
\end{align*}
\end{defn}\vspace{-0.06in}

The LPF of a network graph depends only on the topology of the network graph and therefore
is identical for $\hat{\mathcal{G}}$ and 
$\mathcal{G}^j, \,j\in\{1,\ldots,J\}$.

\begin{theorem}\label{th1}
Let $\sigma^{*}$ be the LPF of a network $\hat{\mathcal{G}}$. Then,
the network $\mathcal{G}$ is stable under the GFS policy for all arrival rate
vectors $\vec{\lambda}$ satisfying $\vec{\lambda}\in\sigma^{*}
\hat{\Lambda}$, where $\hat{\Lambda}$ is the stability region of the
corresponding network graph $\widehat{\mathcal{G}}$.
\end{theorem}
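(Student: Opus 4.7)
The plan is to argue stability of the virtual queue process $\{q_l^j\}$ first, via an LPF-based Lyapunov drift, and then transfer stability to the physical FIFO queues $\{Q_l\}$. Under the randomized splitting in step~(1) of GFS, the per-slot arrival rate into $q_l^j$ is $\alpha_l^j=\lambda_l\pi^j c_l^j/\bar{c}_l$, while service to $q_l^j$ is provided only during state-$j$ slots (density $\pi^j$), at which point GFS reduces to GMS on the static graph $\mathcal{G}^j$ with link rates $c_l^j$ and queue weights $q_l^j$. Since the LPF depends only on the interference topology, $\mathcal{G}^j$ and $\hat{\mathcal{G}}$ share the same LPF $\sigma^{*}$, so the virtual-queue problem decouples across fading states into $J$ standard GMS subproblems.

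The first key step is a decoupling lemma: for every $j$, if $\vec{\lambda}\in\sigma^{*}\hat{\Lambda}$ then $(\lambda_l c_l^j/\bar{c}_l)_l$ lies in the down-closure of $\sigma^{*}Co(\mathcal{R}^j)$. This follows from the natural bijection between maximal rate vectors of $\hat{\mathcal{R}}$ and of $\mathcal{R}^j$: both families are indexed by the same maximal independent sets $S$ of the interference graph, with coordinate values $\bar{c}_l\mathbf{1}[l\in S]$ and $c_l^j\mathbf{1}[l\in S]$ respectively. Any convex decomposition $\lambda_l/\sigma^{*}\le\sum_{S}\alpha_{S}\bar{c}_l\mathbf{1}[l\in S]$ witnessing $\vec{\lambda}\in\sigma^{*}\hat{\Lambda}$ therefore rescales coordinate-wise to $(\lambda_l c_l^j/\bar{c}_l)/\sigma^{*}\le\sum_{S}\alpha_{S}c_l^j\mathbf{1}[l\in S]\in Co(\mathcal{R}^j)$.

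With the decoupling lemma in hand, I would apply Foster--Lyapunov to $V(t)=\sum_{l,j}(q_l^j(t))^2$. Conditioning on the queue vector and averaging over the random channel state and the arrival splitting, the drift simplifies to
\begin{equation*}
E[\Delta V\mid q]\le K+2\sum_{j}\pi^{j}\sum_{l}q_l^j\left(\frac{\lambda_l c_l^j}{\bar{c}_l}-r_l^{GMS,j}(q^j)\right),
\end{equation*}
where $r^{GMS,j}(q^j)$ denotes the GMS schedule on $\mathcal{G}^j$ for weights $q^j$. For $\vec{\lambda}$ in the interior of $\sigma^{*}\hat{\Lambda}$, the decoupling lemma furnishes $\mu^j\in Co(\mathcal{R}^j)$ and $\epsilon>0$ with $\lambda_l c_l^j/\bar{c}_l\le(1-\epsilon)\sigma^{*}\mu_l^j$, and the LPF-based weight inequality $\sum_l q_l^j r_l^{GMS,j}(q^j)\ge\sigma^{*}\sum_l q_l^j\mu_l^j$ from the Joo--Shroff analysis of GMS on each static $\mathcal{G}^j$ then yields negative drift outside a bounded region. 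The main obstacle is this last estimate: the LPF guarantees a weight ratio only via an iterative local-pooling argument over subgraphs, so the bound must be combined with the specific $\mu^j$ produced by decoupling rather than via a generic max-weight comparison.

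Finally, I would close the argument by showing $Q_l(t)\le\sum_j q_l^j(t)$ for all $t$, by induction on $t$. Arrivals preserve the inequality because each packet enters the real queue $Q_l$ and exactly one virtual queue $q_l^j$, while a service of link $l$ in state $j$ removes $\min(c_l^j,Q_l(t))$ from the left side and $\min(c_l^j,q_l^j(t))$ from the right side; a short four-case analysis on whether each of $Q_l$ and $q_l^j$ exceeds $c_l^j$ confirms that the inequality is preserved. Mean recurrence of $\{q_l^j\}$ then immediately implies mean recurrence of $\{Q_l\}$, giving stability of $\mathcal{G}$ under GFS for every $\vec{\lambda}\in\sigma^{*}\hat{\Lambda}$.
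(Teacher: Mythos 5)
Your outer architecture matches the paper's: stabilize the virtual queues first, then transfer stability to the real FIFO queues via the pathwise bound $Q_l(t)\le\sum_j q_l^j(t)+B$ (your induction is essentially the paper's Lemma 3), and your ``decoupling lemma'' is essentially the paper's Lemma 1 --- rescaling a witness decomposition $\lambda_l\le\sigma^{*}\sum_S\alpha_S\bar{c}_l\mathbf{1}[l\in S]$ coordinate-wise by $c_l^j/\bar{c}_l$ to land in $\sigma^{*}Co(\mathcal{R}^j)$. The gap is in the middle step. The inequality you invoke,
\begin{equation*}
\sum_l q_l^j\, r_l^{GMS,j}(q^j)\;\ge\;\sigma^{*}\sum_l q_l^j \mu_l^j \quad\text{for all } q^j \text{ and all } \mu^j\in Co(\mathcal{R}^j),
\end{equation*}
is not a consequence of the local pooling factor and is false in general. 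Take a three-edge path under the node-exclusive model with unit rates and weights $(1,\,1+\epsilon,\,1)$: GMS picks the middle edge and stops, achieving weight $1+\epsilon$, while the max-weight schedule (the two end edges) has weight $2$; yet a tree has $\sigma^{*}=1$, so your inequality would demand GMS weight at least $2$. The LPF is \emph{not} a per-slot weight-approximation ratio for GMS; the only generic weight guarantee is the $1/\kappa$ bound from the interference degree, which is strictly weaker than $\sigma^{*}$ on exactly the graphs (trees, LPF-$1$ graphs) where the theorem is most interesting. Your own caveat that the bound ``must be combined with the specific $\mu^j$ produced by decoupling'' does not rescue it: the decoupling places no restriction on $\mu^j$ beyond membership in $Co(\mathcal{R}^j)$, and in the counterexample $\mu^j$ can be the max-weight schedule itself. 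Consequently the quadratic Lyapunov function $\sum_{l,j}(q_l^j)^2$ cannot yield negative drift at the $\sigma^{*}$ threshold.

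What the LPF actually provides is the coordinate-wise non-domination property $\sigma^{*}\vec{\mu}\nprec\vec{\nu}$ for convex combinations $\vec{\mu},\vec{\nu}$ of maximal schedules on a \emph{subgraph}, and this is only usable after you restrict attention to the set of queues whose weights $q_l^j c_l^j$ are (locally) maximal and growing fastest --- because on that critical set GMS provably serves a maximal schedule, and the existence of a single coordinate with service at least arrival controls the growth of the \emph{maximum} weight, not of a weighted sum. That is why the paper works in the fluid limit with the $L^\infty$-type Lyapunov function $V(t)=\max_{l,j}\tilde{q}_l^j c_l^j$, identifies the critical sets $L^j(t)$, expresses arrivals as $\sigma^{*}\pi^j\vec{\mu}$ and service as $\pi^j\vec{\nu}$ on $\mathcal{E}_{L^j(t)}$ (its Lemmas 1 and 2), and extracts one link with $\lambda_l^j-\gamma_l^j\le-\epsilon$, which bounds $\frac{d}{dt}V$ since all queues in $L(t)$ share the same weight derivative. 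To repair your proof you would need to replace the Foster--Lyapunov drift of the quadratic function with this fluid-limit, max-weight argument; the rest of your outline then goes through.
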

Theorem 1 provides performance guarantees for our scheduling policy for any wireless network in terms of the
 stability region of an associated identical static network whose link rates are fixed at their expected rates. Note that an LPF of 1 implies that the associated greedy policy can guarantee stability for any arrival rate in $\hat{\Lambda}$. 
Examples of network graphs which have $LPF=1$ include tree network
graphs under the $k$-hop interference model for $k\geq1$. In
\cite{Birand}, all network graphs with $LPF=1$ under the
node-exclusive interference model are identified.\par
We prove Theorem 1 by first establishing the stability of the virtual queues. We then provide Lemma \ref{lemma3} to establish stability of the real FIFO queues as well.
\begin{IEEEproof}
We consider the fluid limit model of the system. Let
$\vec{A_{l}^{j}(t)}$ denote the cumulative arrival process into
queue $\mathit{q}_{l}^{j}$ and $S_{l}^{j}(t)$ denote the cumulative
service process for $\mathrm{Q}_{l}^{j}$ until time slot $t$. For the
arrival and service processes, we use $A_{l}^{j}(t)=A_{l}(\lfloor
t\rfloor)$, and $S_{l}^{j}(t)=S_{l}^{j}(\lfloor
t\rfloor).$ For the queue process $\mathit{q}_{l}^{j}(t)$, we employ
linear interpolation.\par We now consider a sequence of scaled queuing
systems $(\vec{\mathit{q}}^{n}(\cdot),\vec{A^{n}}(\cdot),\vec{S^{n}}(\cdot))$.
where we apply the scaling ${\mathit{q}_{l}^{j}(nt)/n},\;
{A_{l}^{j}(nt)/n},\mbox{ and } S_{l}^{j}(t)(nt)/n,\,\forall
l\in\mathcal{E}$ with the queue process satisfying
$\sum_{l\in\mathcal{E}}{\mathit{q}_{l}^{j}}(0)\leq n$. Then, using the
techniques to establish fluid limit in \cite{Dai}, one can show that
a fluid limit exists almost surely, \emph{i.e,} for almost all
sample paths and for any positive $n\rightarrow\infty$, there exists
a sub-sequence $n_{k}$ with $n_{k}\rightarrow\infty$ such that
following convergence holds uniformly over compact sets: For all
$l\in\mathcal{E},$
$\frac{1}{n_{k}}\{A_{l}^{j}\}^{n_{k}}(n_{k}t)\rightarrow\frac{\pi^jc_{l}^{j}}{\overline{c}_{l}}\lambda_{l}t,\,j\in1\ldots
J\,$,
$\frac{1}{n_{k}}\{S_{l}^{j}(t)^{n_{k}}(n_{k}t)\rightarrow
S_{l}^{j}(t)$, and
$\frac{1}{n_{k}}\{\mathit{q}_{l}^{j}\}^{n_{k}}(n_{k}t)\rightarrow
\tilde{\mathit{q}}_{l}^{j}(t)$, where $\tilde{\mathit{q}}_{l}^{j}(t)(t)$ and
$S_{l}^{j}(t)$ are the fluid limits for the queue length
processes and the service rate processes respectively. The fluid
limit is absolutely continuous and hence the derivative of
$\tilde{\mathit{q}}_{l}^{j}(t)$ exists almost everywhere \cite{Dai}
satisfying:
\begin{equation}
{\frac{d}{dt}\tilde{\mathit{q}}_{l}^{j}(t)\ =\begin{cases}
\left[\frac{\pi^jc_{l}^{j}}{\overline{c}_l}\lambda_{l}
-\gamma_{l}^{j}(t)\right]^{+} & \mathit{q}_{l}^{j}(t)>0\\
0 & \mbox{otherwise}\end{cases}}\label{diffeq}
\end{equation}

where $\gamma^{j}_{l}(t)=\frac{d}{dt}(s_{l}^{j}(t))$.

Consider the times $t$ when the derivative $\frac{d}{dt}\mathit{q}^{j}_{l}(t)$
exists for all $l\in\mathcal{E},j\in{1,\ldots, J}$. Let $L_{0}(t)$ denote the set of
queues with the largest weight, \emph{i.e.,} $L_{0}(t)=\argmax_{\mathrm{Q}_{l}^{j}\in\Psi}
\tilde{\mathit{q}}_{l}^{j}(t)c_{l}^{j},$
where $\Psi$ is the set of all queues in the network. Let $L(t)$
denote the set of queues from $L_{0}(t)$, which have the maximum
derivative of the weights, \emph{i.e.,} $L(t)=\argmax_{\mathrm{Q}_{l}^{j}\in L_{0}(t)}\frac{d}{dt}
\tilde{\mathit{q}}_{l}^{j}(t)c_{l}^{j}.$
The set $L(t)$ can then be expressed as
$\underset{j}{\bigcup}L^j(t)$, where \vspace{-0.1in}
\[L^j(t)=\{\mathrm{Q}_{l}^{j},l \in \mathcal{E}\mid 
\mathrm{Q}_{l}^{j}\in L(t)\},\,j=1\ldots J.\]

Since $\tilde{\mathit{q}}_{l}^{j}(t)$ is absolutely continuous, there
exists a small $\delta>0$ such that in the interval $(t,t+\delta)$, the weight of queues in $L^j(t)$ dominates
the weight of other queues, whenever the network channel state is $j$. Hence, GFS gives
priority to queues belonging to $L^j(t)$ in 
$(t,t+\delta)$. 
We now provide the following two lemmas to characterize the arrival
rates and service rates for the queues in $L^j(t)$. Let 
$\mathcal{E}_{L^j(t)}\subset\mathcal{G}^j$ denote the set of links whose queues are in $L^j(t)$.
Thus $\mathcal{R}^{j}_{\mathcal{E}_{L^j(t)}}$ denotes the set of all feasible
rate allocation vectors for the subgraph $\mathcal{E}_{L^j(t)}$. 
Let $\vec{\lambda}^{j}$ be the $|\mathcal{E}|$ dimensional arrival rate vector whose each element $\lambda^{j}(l)$ represents the arrival rate into queue $\mathrm{q}_{l}^{j}$. For any $|\mathcal{E}|$ vector $\vec{\eta}$, the projection of 
$\vec{\eta}$ on a subset of edges $L$, denoted by $\vec{\eta}|_{L}$, is defined as 
the $|L|$ dimensional vector obtained by restricting $\vec{\eta}$ to the edges in 
$L$.

\begin{lem} \label{lemma1}
Consider any fading state $j \in {1\ldots J}$ such that
 $L^j(t)\neq \emptyset$. If the arrival rate vector $\vec{\lambda}\in\sigma^{*}\hat{\Lambda}$, then $\vec{\lambda}^{j}$, the arrival rate into the queues
 $\mathrm{q}_{l}^{j} \,\forall l\in\mathcal{E}$, when  projected on to the set of links 
 $\mathcal{E}_{L_{j}(t)}$ can be expressed as $\vec{\lambda}^{j}|_{\mathcal{E}_{L^j(t)}}=\sigma^{*}\pi^j\vec{\mu},$
where $\vec{\mu}$ is a convex combination of the rate allocation vectors in
 $\mathcal{R}^{j}_{\mathcal{E}_{L^j(t)}}.$  
\end{lem}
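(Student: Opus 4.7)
The plan is to (i)~unpack the arrival rate $\vec{\lambda}^{j}$ using the probabilistic placement in step~(1) of GFS, (ii)~use the hypothesis $\vec{\lambda}\in\sigma^{*}\hat{\Lambda}$ to obtain a convex decomposition in $Co(\hat{\mathcal{R}})$, and (iii)~transport that decomposition schedule-by-schedule from $\hat{\mathcal{R}}$ to $\mathcal{R}^{j}$ by rescaling each activated link's mean rate $\bar{c}_{l}$ to its state-$j$ rate $c_{l}^{j}$.

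Concretely, step~(1) gives $\lambda^{j}(l)=\pi^{j}c_{l}^{j}\lambda_{l}/\bar{c}_{l}$ for every link $l$. The hypothesis $\vec{\lambda}\in\sigma^{*}\hat{\Lambda}$ yields nonnegative coefficients $\alpha_{k}$ summing to one and schedules $\vec{r}_{k}\in\hat{\mathcal{R}}$ with $\vec{\lambda}\preceq\sigma^{*}\sum_{k}\alpha_{k}\vec{r}_{k}$. Each $\vec{r}_{k}$ activates a non-interfering set $I_{k}\subseteq\mathcal{E}$ at rate $\bar{c}_{l}$, so defining $\vec{r}^{\,\prime}_{k}$ to activate the same $I_{k}$ but at rate $c_{l}^{j}$ yields a vector in $\mathcal{R}^{j}$: non-interference is purely topological, and the maximality condition~(b) is preserved by the rescaling. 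The probabilistic splitting is chosen precisely so that $\pi^{j}(c_{l}^{j}/\bar{c}_{l})\,r_{k}(l)=\pi^{j}r^{\,\prime}_{k}(l)$ for every $l$, giving
\begin{equation*}
\vec{\lambda}^{j}\preceq\sigma^{*}\pi^{j}\sum_{k}\alpha_{k}\vec{r}^{\,\prime}_{k}\in\sigma^{*}\pi^{j}\,Co(\mathcal{R}^{j}).
\end{equation*}
Projecting both sides onto $\mathcal{E}_{L^{j}(t)}$ preserves this inequality coordinate-wise.

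The main obstacle is closing the gap between what this argument produces and what the lemma literally asserts. First, the restriction of $\vec{r}^{\,\prime}_{k}$ to $\mathcal{E}_{L^{j}(t)}$ need not satisfy the subgraph's own maximality condition~(b), because links in the subgraph that were silenced by interferers lying outside it become free once the outside is discarded; I would fix this by completing each projected schedule to a maximal schedule $\vec{s}_{k}\in\mathcal{R}^{j}_{\mathcal{E}_{L^{j}(t)}}$ dominating $\vec{r}^{\,\prime}_{k}|_{\mathcal{E}_{L^{j}(t)}}$. Second, the statement is phrased as an equality in $\vec{\mu}$; since the downstream use of the lemma pairs it with the local-pooling condition $\sigma^{*}\vec{\mu}\nprec\vec{\nu}$, the equality is needed only in the monotone ($\preceq$) sense and can be reached either by reweighting the convex coefficients $\alpha_{k}$ or by reading the claim in that natural sense. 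With these two bookkeeping steps, $\vec{\mu}=\sum_{k}\alpha_{k}\vec{s}_{k}\in Co(\mathcal{R}^{j}_{\mathcal{E}_{L^{j}(t)}})$ furnishes the desired decomposition.
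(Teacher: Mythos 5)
Your proposal follows essentially the same route as the paper's own proof: decompose $\vec{\lambda}\preceq\sigma^{*}\sum_{i}\alpha_{i}\vec{\hat{r}}_{i}$ over $\hat{\mathcal{R}}$, use the splitting probability $\pi^{j}c_{l}^{j}/\bar{c}_{l}$ to get $\lambda_{l}^{j}=\sigma^{*}\pi^{j}c_{l}^{j}\sum_{i}\alpha_{i}\mathbf{1}_{\{\hat{r}_{i}(l)\neq0\}}$, transport each schedule to $\mathcal{R}^{j}$ by rescaling the activated links' rates, and project onto $\mathcal{E}_{L^{j}(t)}$. In fact you are somewhat more careful than the paper, which silently passes from $=$ to $\preceq$ in its last line and does not address the loss of maximality under restriction to the subgraph --- both points you correctly identify and patch.
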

\begin{proof}
Since $\vec{\lambda}\in\sigma^{*}\hat{\Lambda}$, it satisfies $\vec{\lambda}\preceq\sigma^{*}\vec{\Phi}$, for some  $\vec{\Phi}=\sum_{i}\alpha_{i}\vec{\hat{r}}_{i}$, where $\vec{\hat{r}}_{i}\in\hat{\mathcal{R}}$, and $\sum_i\alpha_i=1$. One can then write the arrival rate into a link $l$ as $\lambda_{l}=\sigma^{*}\overline{c}_{l}\sum_{i}\alpha_i\mathbf{1}_{\{\hat{r}_{i}(l)\neq0\}},$
where $\mathbf{1}_{\{\hat{r}_{i}(l)\neq0\}}$ is the indicator function. The arrival rate into queue $\mathrm{Q}_{l}^{j}$ is then given by
$\lambda_{l}^{j}=\sigma^{*}\frac{\pi^jc_{l}^{j}}{\overline{c}_{l}}\overline{c}_{l}\sum_{i}\alpha_i\mathbf{1}_{\{\hat{r}_{i}(l)\neq0\}},$
which yields
$\lambda_{l}^{j}=\sigma^{*}\pi^{j}c_{l}^{j}\sum_{i}\alpha_i\mathbf{1}_{\{\hat{r}_{i}(l)\neq0\}}$, for all $l\in\mathcal{E},\text{and } j\in\{1\ldots J\}.$
We can then write the arrival rate vector $\vec{\lambda}^{j}$ in terms of rate allocation vectors  in $\mathcal{R}^{j}$ as $\vec{\lambda}^{j}=\sigma^{*}\pi^{j}\sum_{i}\alpha_i \vec{r}^{j}_{i},$
since if $\mathbf{1}_{\{\hat{r}_{i}(l)\neq0\}}=1, \text{ then } 
\mathbf{1}_{\{r^{j}_{i}(l)\neq0\}}=1$, or $c_{l}^{j}=0$. It follows that $\vec{\lambda}^{j}|_{\mathcal{E}_{L^j(t)}}
\preceq\sigma^{*}\pi^j\vec{\mu}$, where $\vec{\mu}$ is a convex combination of the rate allocation vectors in $\mathcal{R}^{j}_{\mathcal{E}_{L^j(t)}}.$  
\end{proof}

\begin{figure}
\subfloat[\label{fig:four-link graph} ]{\noindent \begin{centering}
\includegraphics[bb=20bp 0bp 400bp 400bp,scale=0.3]{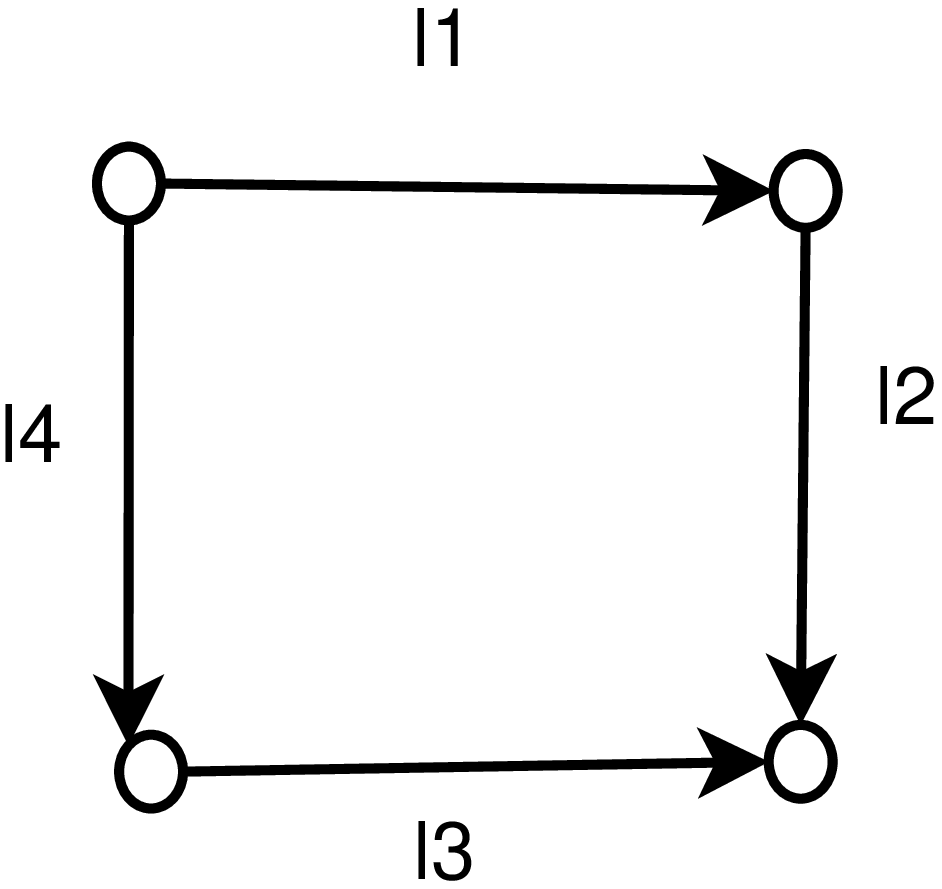}
\par\end{centering}

\centering{}}\subfloat[\label{fig:GFScomparison} ]{\noindent \begin{centering} 
\includegraphics[bb=20bp 170bp 300bp 500bp,scale=0.3]{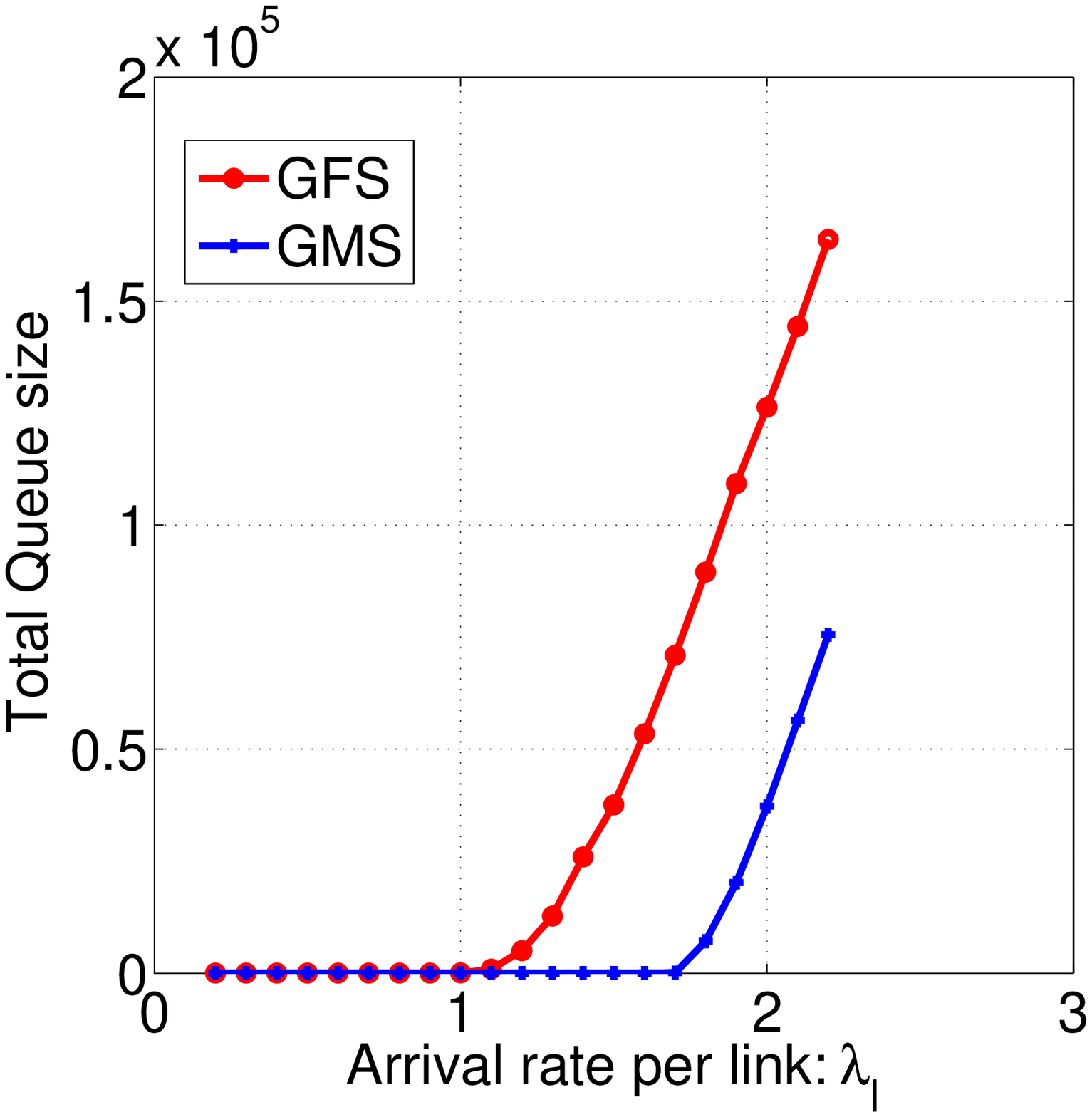}
\par\end{centering}}
\centering{}
\noindent \caption{A four-link network graph is shown in 
Fig. ~\ref{fig:four-link graph} and the performance of GFS and GMS is plotted in Fig. ~\ref{fig:GFScomparison}.}
\vspace{-0.25in}
\centering{}
\end{figure}

\begin{lem} \label{lemma2}
Consider any fading state $j \in {1\ldots J}$ such that $L^j(t)\neq \emptyset$. Then the  service rate vector $\vec{\gamma}^{j}(t)$, projected onto the links in
$\mathcal{E}_{L_{j}(t)}$, can be expressed as $\vec{\gamma}^{j}|_{\mathcal{E}_{L^j(t)}}=\pi^j\vec{\nu},$
where $\vec{\nu}$ is a convex combination of the rate allocation vectors in
 $\mathcal{R}^{j}_{\mathcal{E}_{L^j(t)}}.$
\end{lem}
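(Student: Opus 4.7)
The plan mirrors that of Lemma~\ref{lemma1}: I would express the service on $\mathcal{E}_{L^j(t)}$ as a realization of the GFS rule restricted to that subgraph, and then show the restriction is itself a valid maximal allocation. Since state $j$ occurs with stationary probability $\pi^j$, the fluid service vector naturally factors as $\vec{\gamma}^j(t) = \pi^j \vec{\nu}^j(t)$, where $\vec{\nu}^j(t)$ is the time-averaged rate allocation GFS selects on those sample paths for which the channel state is $j$. It therefore suffices to show that on a short interval $(t,t+\delta)$ each GFS-selected vector, once restricted to $\mathcal{E}_{L^j(t)}$, lies in $\mathcal{R}^j_{\mathcal{E}_{L^j(t)}}$; the desired $\vec{\nu}$ is then their convex average.

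Next I would exploit the absolute continuity of $\tilde{\mathit{q}}_l^j(\cdot)$, exactly as done just before the statement of the lemma, to pick $\delta>0$ small enough that on $(t,t+\delta)$ the weight $\tilde{\mathit{q}}_l^j c_l^j$ of every queue in $L^j(t)$ strictly dominates that of every queue outside $L^j(t)$ having fading index $j$. Hence, whenever GFS runs in state $j$ on this interval, it first selects in decreasing order of weight the links from $\mathcal{E}_{L^j(t)}$, until no further non-interfering link from that subgraph remains, and only afterward turns to links outside $\mathcal{E}_{L^j(t)}$.

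The step that will require the most care, and is the main obstacle, is verifying property (b), maximality, of the restricted rate vector; property (a) is inherited automatically, since a non-interference constraint on $\mathcal{G}^j$ restricts to one on any subgraph. I would argue by contradiction: if some link $k \in \mathcal{E}_{L^j(t)}$ were unscheduled by GFS yet interfered with no scheduled link of $\mathcal{E}_{L^j(t)}$, then $k$ could only have been eliminated because GFS first scheduled some $m$ with $k \in \mathcal{I}_m$; by the priority established in the previous paragraph, any such $m$ must itself lie in $\mathcal{E}_{L^j(t)}$, contradicting the assumption. Therefore $\vec{r}^j|_{\mathcal{E}_{L^j(t)}} \in \mathcal{R}^j_{\mathcal{E}_{L^j(t)}}$ on every sample path for which the channel is in state $j$, and averaging over the GFS tie-breaking and the randomized virtual-queue placement expresses $\vec{\gamma}^j|_{\mathcal{E}_{L^j(t)}}$ as $\pi^j$ times a convex combination of elements of $\mathcal{R}^j_{\mathcal{E}_{L^j(t)}}$, completing the proof.
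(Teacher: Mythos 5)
Your proposal is correct and follows essentially the same route as the paper's (largely omitted) proof: priority of the $L^j(t)$ queues on a small interval $(t,t+\delta)$, hence the GFS selection restricted to $\mathcal{E}_{L^j(t)}$ is a valid maximal allocation in $\mathcal{R}^j_{\mathcal{E}_{L^j(t)}}$, and the factor $\pi^j$ comes from ergodicity of the channel-state process. Your contradiction argument for the maximality property (b) of the restricted vector is exactly the detail the paper defers to \cite{Stolyar} and \cite{JooShroff}, so you have in fact supplied more of the argument than the paper does.
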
  
\begin{proof}
The full proof is similar to that in  
 \cite{Stolyar} and \cite{JooShroff} and is omitted here. 
Consider all queues belonging to $L^j(t)$. Since the queues in $L^j(t)$ have 
the highest weight in $(t,t+\delta)$ when the network is in state $j$, the GFS scheduler gives priority
queues in $L^j(t)$ whenever the network channel state enters state $j$ in the time interval $(t,t+\delta)$. Consequently, the rate allocation vectors selected by GFS
in network channel state $j$, when projected on the set of links $\mathcal{E}_{L^j(t)}$
yields an element from the set
$\mathcal{R}^{j}_{L^j(t)}$. Therefore, under the GFS policy, 
the service rate vector  $\vec{\gamma}^{j}(t)$ for the set of queues $\vec{\mathrm{Q}}^j$, 
projected onto $\mathcal{E}_{L^{j}(t)}$ is a convex combination of the elements 
of $\mathcal{R}^{j}_{\mathcal{E}_{L^j(t)}}$. From the ergodicity of the network channel
state process, GFS serves elements in $\mathcal{R}^{j}_{L^j(t)}$ a fraction
$\pi^j$ of the time. It follows that $\vec{\gamma}^{j}|_{L_{j}(t)}=\pi^j\vec{\nu}$. \end{proof}
From Lemma \ref{lemma1} and Lemma \ref{lemma2}, 
the arrival rate $\vec{\lambda}^{j}|_{\mathcal{E}_{L^j(t)}}$ as well as
the service rate $\vec{\gamma}^{j}|_{\mathcal{E}_{L^j(t)}}$ can be 
expressed in terms of the convex combinations of elements in 
$\mathcal{R}^{j}_{L^j(t)}$. Since $\hat{\mathcal{G}}$ satifies $\sigma^{*}$ local pooling, 
$\mathcal{E}_{L^j(t)}$ being a subgraph of $\mathcal{G}^j$ satisfies
$\sigma^{*}$ local pooling. It follows from the definition of
 $\sigma$-local pooling that there exists a link 
$l\in\mathcal{E}_{L^j(t)}$ such that its queue 
$\mathrm{q}_{l}^{j}$ satisfies 
$\lambda_{l}^{j}-\gamma_{l}^{j}\leq-\epsilon$,  
for some $\epsilon >0$. Since 
$\frac{d}{dt}\tilde{\mathit{q}}_{l}^{j}(t)=\frac{d}{dt}
\tilde{\mathit{q}}_{m}^{n}(t)$ for any pair
$\mathrm{Q}_{l}^{j},\mathit{Q}_{m}^{n} \in L(t)$, we obtain 
$\frac{d}{dt}\tilde{\mathit{q}}_{l}^{j}(t)<\epsilon$,
for all $\mathrm{Q}_{l}^{j}\in L(t)$. \par 
We now consider the Lyapunov function $V(t)=\max_{l\in \mathcal{E},j\in \{1\ldots J\}}
\tilde{\mathit{q}}_{l}^{j}.$
The derivative of $V(t)$ is given by :
\begin{align*}
\frac{d}{dt}V(t)&=\frac{d}{dt}\max_{l\in\mathcal{E},j\in\{1\ldots J\}}\tilde{\mathit{q}}_{l}^{j}c_{l}^{j}
\leq\max_{\tilde{\mathit{q}}_{l}^{j}\in L(t)}\frac{d}{dt}
\tilde{\mathit{q}}_{l}^{j}c_{l}^{j}\leq-\epsilon.
\end{align*}
The negative drift of the Lyapunov function implies that the 
fluid limit model of the system is stable and hence by Theorem 4.2 
 in \cite{Dai}, the original
system is also stable.


\begin{lem} \label{lemma3}
Consider any sequence of arrivals $A_l(t), t=1,2,3\cdots,$ for all $l\in\mathcal{E}$. Then under the GFS policy, we have $q_l(t)\leq \sum_{j=1}^{J}q^{j}_{l}(t) + B,\, \forall t=1,2,3,\cdots,\text{ and } \forall \,l\in\mathcal{E}$, where $B$ is a bounded positive number.
\end{lem}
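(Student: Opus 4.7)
The plan is to establish the stronger pathwise bound $D_l(t) := q_l(t) - \sum_{j=1}^{J} q_l^j(t) \leq c_{\max}$ for every $t$ and every sample path, where $c_{\max} := \max_{l \in \mathcal{E},\, j} c_l^j$; this yields the lemma with $B = c_{\max}$. The argument will be a direct induction on the slot index that separates the arrival and departure updates within each slot, so no probabilistic reasoning is needed at all.

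First I will observe that arrivals leave $D_l$ unchanged. The probabilistic splitting rule inserts every arriving packet into the real FIFO queue $q_l$ and simultaneously into exactly one of the virtual queues $q_l^j$, so $q_l$ and $\sum_j q_l^j$ each grow by $A_l(t)$. For the departure step in slot $t$, let $j^*$ denote the current fading state; under GFS only the real queue $q_l$ and the state-$j^*$ virtual queue $q_l^{j^*}$ are served, so writing $a$ and $b$ for the respective post-arrival values and $r \in \{0, c_l^{j^*}\}$ for the rate allocated to link $l$, the update is
\[
\Delta D_l \;=\; \min(r,b) - \min(r,a).
\]

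The induction then closes via a single case split on $a$ versus $r$. If $a \geq r$, then $\min(r,a) = r$ and $\Delta D_l \leq 0$, so $D_l$ cannot grow. If $a < r$, then substituting $a = q_l(t-1) + A_l(t)$ and using the non-negativity of the remaining virtual queues yields $D_l(t) \leq r - A_l(t) - \sum_{j} q_l^j(t-1) \leq r \leq c_{\max}$ directly, independently of the previous value $D_l(t-1)$. Starting from $D_l(0) = 0$, the bound $D_l(t) \leq c_{\max}$ is therefore an invariant, which is exactly the lemma.

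I do not anticipate any serious obstacle. The essential content is the one case split on $a$ versus $r$; the only care required is in bookkeeping the arrival-then-departure ordering within a slot and correctly handling the two $\min(r,\cdot)$ terms that appear when a queue happens to be smaller than its allocated rate. In particular, the bound is entirely deterministic and does not rely on Theorem~\ref{th1} or on any stability of the virtual queues.
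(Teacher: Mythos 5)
Your proof is correct and follows essentially the same route as the paper's: a deterministic slot-by-slot induction in which arrivals cancel out of the difference $q_l - \sum_j q_l^j$ and the only nontrivial case is when the real queue is smaller than the allocated rate and therefore empties. The only difference is that the paper's version of this induction preserves the tighter invariant $q_l(t)\leq\sum_{j}q_l^j(t)$ (i.e., $B=0$), whereas your looser bookkeeping in the emptying case yields $B=c_{\max}$; both suffice for the lemma as stated.
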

\begin{proof}
Without loss of generality, we assume $B=0$. Suppose at the beginning of time slot $t, t\geq 0$, we have $q_l(t)\leq \sum_{j=1}^{J}q^{j}_{l}(t),\, \forall \,l\in\mathcal{E}$. Let $j$ denote the network state in time slot $t$. Then, if $D_l(t)$ and $D^{j}_l(t)$ denote the packets departing in time slot $t$ from the real FIFO queue $q_l$ and the virtual queue $q^{j}_l$ respectively , the following must be true:
If $D_l(t)=D^{j}_l(t)$, then in time slot $t+1$, we have $q_l(t+1)\leq \sum_{j=1}^{J}q^{j}_{l}(t+1)$, since both $q_l(t)$ and $\sum_{j=1}^{J}q^{j}_{l}(t)$ are incremented by the same number of arrivals $A_l(t)$.  Similarly, if $D_l(t)>D^{j}_l(t)$, then it again implies that $q_l(t+1)< \sum_{j=1}^{J}q^{j}_{l}(t+1)$. Finally, if $D_l(t)<D^{j}_l(t)$, it implies that $q_l(t)< \vec{r}^j_l(t)$. Consequently, $q_l(t)$ empties and  $q_l(t+1)=A_l(t)\leq\sum_{j=1}^{J}q^{j}_{l}(t+1)$.
Since  $q_l(t)\leq \sum_{j=1}^{J}q^{j}_{l}(t)$ is satisfied at $t=0$, we obtain the desired condition at any time $t$.
\end{proof}
Lemma 3 shows  that if the virtual queues are stable then the corresponding real FIFO queue is also stable.
\end{IEEEproof}
\section{Simulation}

In this section we simulate the performance of GFS for the four link network graph shown in Fig.~\ref{fig:four-link graph}. Each link independently assumes one of four different states in each time slot, where the link states correspond to rates 1, 2 , 3 and 4 units per time slot. The probability distribution of the link states are independent and non-identical across links, with the average link rates being $\overline{c}_1=2.7,\, \overline{c}_2=2.1, \overline{c}_3=2.8,\, \text{ and }  \overline{c}_4=3.1$ respectively. 
In Fig.~\ref{fig:GFScomparison}, we plot the total queue sizes as we uniformly increase the arrival rate into all links. The plots show that GFS is able to sustain a load of atleast 1 unit per link. Since the network in Fig.~\ref{fig:four-link graph} has LPF value of 1, GFS can stabilize the region $\hat{\Lambda}$. GFS therefore guarantees a per-link symmetric rate of at least 1, since the arrival rate $[1\; 1\; 1\; 1]$ lies inside $\hat{\Lambda}$. While the performance of GMS is better than GFS in the plot of Fig.~\ref{fig:GFScomparison}, the current known performance guarantee of GMS is only half the network stability region $\Lambda$ under the one-hop interference model, which corresponds to a symmetric load of 0.5 per link. Based on simulations, we conjecture that the performance of GFS is a lower bound on the performance of Greedy Maximal Scheduling in time varying wireless networks. The performance guarantees for GFS thus motivates the analysis of GMS for time varying networks as our future work.

\section{conclusion} 
We develop a greedy scheduler, GFS, for wireless networks with
time varying channel states and provide provable performance guarantees for 
this scheduler. Our greedy scheduler, though suboptimal, has low 
computational complexity and performs better than 
non-opportunistic schedulers that do not exploit instantaneous channel 
state information. The performance guarantees, along with simulations, also paint an optimistic picture of the performance of GMS in wireless networks with fading channels, and we conjecture the stability region guaranteed under GFS for any wireless network to be a  lower  bound on the stability region of GMS.

\end{document}